\begin{document}

\theoremstyle{plain}
\newtheorem{theorem}{Theorem}
\newtheorem{lemma}[theorem]{Lemma}
\newtheorem{corollary}[theorem]{Corollary}
\newtheorem{conjecture}[theorem]{Conjecture}
\newtheorem{proposition}[theorem]{Proposition}

\theoremstyle{definition}
\newtheorem{definition}{Definition}

\theoremstyle{remark}
\newtheorem*{remark}{Remark}
\newtheorem*{example}{Example}

\title{Necessary conditions on entanglement catalysts}   
\author{Yuval Rishu Sanders}\email{yrsanders@math.ucalgary.ca}
\author{Gilad Gour}
\affiliation{Institute for Quantum Information Science and 
Department of Mathematics and Statistics,
University of Calgary, 2500 University Drive NW,
Calgary, Alberta, Canada T2N 1N4} 

\date{\today}

\begin{abstract}
Given a pure state transformation $\psi\mapsto\phi$ restricted to entanglement-assisted local operations with classical communication, we determine a lower bound for the dimension of a catalyst allowing that transformation. Our bound is stated in terms of the generalised concurrence monotones (the usual concurrence of two qubits is one such monotone). We further provide tools for deriving further conditions upon catalysts of pure state transformations.
\end{abstract}

\pacs{03.67.Mn, 03.67.Hk, 03.65.Ud}

\maketitle

\section{Introduction}
\label{sectINT}

Originally thought a paradox, the inseparable quality of many-body quantum systems known today as \emph{entanglement} has been shown to be fundamental to understanding the emerging field of Quantum Information. A popular example is the so-called \emph{quantum teleportation} process, in which a precise quantum state may be transmitted between labs using only \emph{local operations} assisted by \emph{classical communication} (usually abbreviated as LOCC). To perform this task, the two labs must `consume' a shared entangled state: they must make an entangled system `less' entangled in a precise sense. This precise notion of comparison of entangled states is accomplished by \emph{Nielsen's Theorem}~\cite{Nielsen:1999}.

Nielsen's Theorem essentially states that there is a partial order $\prec$ on the set of states in some bipartite Hilbert space $\mathcal{H}^A \otimes\mathcal{H}^B = \mathcal{H}^{AB}$ which is invariant under local unitary operations (i.e. unitary operators of the form $U^A \otimes U^B \in \mathcal{B}(\mathcal{H}^A)\otimes\mathcal{B}(\mathcal{H}^B)$) and has the property that, for any $T\in \mathcal{B}(\mathcal{H}^{AB})$, we have $\psi\prec T(\psi)$ if and only if $T$ is LOCC. These conditions on the partial order $\prec$ (known as \emph{majorization}) are valuable in studying entangled states in general because entanglement cannot increase under LOCC restrictions. It is thus possible to say that $\phi$ is less entangled than $\psi$ if $\psi\prec\phi$. The converse is not true in general. 

If two labs share some (pure) entangled state $\psi$, there generally exist states $\phi$ which cannot be constructed from $\psi$ using only LOCC. If those two labs share an additional entangled state $\chi\in\mathcal{H}^{A'B'}$ (the $'$ represents the fact that this is a different Hilbert space from $\mathcal{H}^{AB}$ in which $\psi$ is contained), it is possible to consume some of the entanglement of $\chi$ to allow the labs to create some desired $\phi$ they could not create before~\cite{JonathanPlenio:1999}. The problem is that there exist $\phi$ so that, while $\psi\mapsto\phi$ may be impossible under LOCC, we have $\psi\otimes\chi\prec\phi\otimes\chi$ and thus an LOCC transformation by Nielsen's Theorem. We can interpret this as saying that $\psi\otimes\chi\in\mathcal{H}^{AB}\otimes\mathcal{H}^{A'B'}$ can be mapped to $\phi\otimes\chi\in\mathcal{H}^{AB}\otimes\mathcal{H}^{A'B'}$, or that we can perform $\psi\mapsto\phi$ without consuming \emph{any} extra entanglement.

The problem of \emph{entanglement catalysis} is to quantify the resource of access (without consumption) of ancillary entangled states. It has been recently shown~\cite{Turgut:2007,Klimesh:2007} that a pure state transformation $\psi\mapsto\phi$ is possible under \emph{entanglement-assisted} LOCC, or eLOCC, if and only if $S_\alpha (\sigma(\psi)) \geq S_\alpha (\sigma(\phi))$ for each $\alpha>0$. Here, $\sigma(\zeta)$ denotes the \emph{Schmidt coefficients} of the pure state $\zeta$ and $S_\alpha (x)$ denotes the \emph{R\'{e}nyi entropy} of order $\alpha$ of some discrete probability distribution $x$. Explicitly,
\begin{equation} \label{renyi} S_\alpha (x) = \frac{1}{1-\alpha} \log \left( \sum_i x_i^\alpha \right) \end{equation}
where $\lim_{\alpha\rightarrow 1} S_\alpha (x) = -\sum_{i} x_i \log x_i$ is the usual \emph{Shannon entropy} of $x$.

It is thus possible, given states $\psi, \phi$, to check whether there exists an eLOCC operation mapping one state to the other. This does not fully characterise the transformation, however, because this test gives no information about the catalyst itself. Indeed, for any ancillary state $\chi$, we have $S_\alpha (\psi\otimes\chi) = S_\alpha (\psi) + S_\alpha (\chi)$. While it is possible in principle to test $\psi\otimes\chi$ and $\phi\otimes\chi$ (for arbitrary $\chi$) against Nielsen's Theorem, such a direct test is usually impractical. What is needed is some general method for obtaining information about possible catalysts given only $\psi$ and $\phi$.

In this paper we present, for the first time, conditions on possible catalysts of some eLOCC transformation. In particular, we provide a lower bound on the dimension of a possible catalyst state. This lower bound is stated and proved by extensive use of the \emph{generalised concurrence monotones}~\cite{Gour:2005} (or \emph{concurrences} for short). The concurrences are a set of $N-1$ measures of entanglement (in the sense of Vidal~\cite{Vidal:2000}) of some $N$-dimensional state, and expand upon the more well-known concurrence of two-dimensional states~\cite{HillWootters:1997}. Section~\ref{sectCONC} introduces the concurrences and the general procedure for providing the lower bound. In section~\ref{sectMAIN} we state and prove the fundamental proposition of this paper; namely, the lower bound on the dimension of a possible catalyst of an eLOCC transformation. We also discuss how our techniques can be used to find further conditions on such catalysts. Section~\ref{sectEND} provides an example of an eLOCC transformation where our methods may assist us in finding a catalyst.

\section{Concurrences and concurrence factorisation}
\label{sectCONC}

\begin{definition}[Generalised concurrence monotones~\cite{Gour:2005}]
The $k^{th}$ \emph{concurrence} of a pure bipartite state $\zeta$ (of dimension $n\geq k$) is defined as
\begin{equation}\label{concs} C_k (\zeta) \equiv \left( \frac{ e_k (\sigma(\zeta)) }{ e_k \left( \iota_n  \right) } \right)^{1/k} \end{equation}
where $\iota_n = (1/n,\ldots,1/n) \in \mathbb{R}^n$, $\sigma(\zeta)$ denotes the Schmidt coefficients of the state $\zeta$, and the $k^{th}$ \emph{elementary symmetric polynomial} $e_k (x)$ of $n$ variables $x=(x_1,\ldots,x_n)$ is defined as
\begin{equation}\label{eks} e_k (x) \equiv \sum_{i_1 < \cdots < i_k} x_{i_1} \cdots x_{i_k} \end{equation}
with $e_0 \equiv 1$. If $\zeta$ is mixed, we define \[C_k (\zeta) = \min_{\{p_i, \zeta_i\}} \sum_i p_i C_k(\zeta_i)\] where the minimization is over all pure state ensembles $\{p_i, \zeta_i\}$ realizing $\zeta$.
\end{definition}

Given some LOCC transformation $\psi\mapsto\phi$, it is a necessary condition that each concurrence is monotonic, i.e. $C_k (\psi) \geq C_k (\phi)$ for $k=2\ldots n$ if the dimension of $\psi$ is $n$. Thus, if $\psi\mapsto\phi$ by eLOCC, we must have $C_k (\psi\otimes\chi) \geq C_k (\phi\otimes\chi)$ for any catalyst $\chi$ so that $\psi\otimes\chi\mapsto\phi\otimes\chi$ by LOCC. The goal of this section is to evaluate this inequality in terms of $C_h (\psi), C_i (\phi), C_j (\chi)$ for various indices $h,i,j$.

The elementary symmetric polynomials are, as their name suggests, the most natural choice of a basis for the ring of symmetric polynomials (polynomials invariant under permutation of the variables). Another common basis is the set of \emph{power sum symmetric polynomials} $p_l (x) \equiv \sum_i x_i^l$ which have the enviable property that $p_l (x\otimes y) = p_l (x) p_l (y)$. To exploit this property, we make use of \emph{Newton's Identities}~\cite{MathWorld:NewtId}:
\begin{equation}\label{newtid} k e_k (x) = \sum_{l=1}^{k} (-1)^{l-1} e_{k-l} (x) p_{l} (x) \end{equation}
It is a straightforward task to recursively use this expression to write $e_k$ as a function of $p_l$ and vice-versa.

\begin{equation}\label{ekaspl}
\begin{array}{ccl}
e_1 & = & p_1 \\
e_2 & = & \frac{1}{2} \left(p_1^2-p_2\right) \\
e_3 & = & \frac{1}{6} \left(p_1^3-3 p_1 p_2+2 p_3\right) \\
e_4 & = & \frac{1}{24} \left(p_1^4-6 p_1^2 p_2+3 p_2^2+8 p_1 p_3-6 p_4\right) \\
 & \vdots & \\
\end{array}
\end{equation}

\begin{equation}\label{plasek}
\begin{array}{ccl}
p_1 & = & e_1 \\
p_2 & = & e_1^2-2e_2 \\
p_3 & = & e_1^3-3e_1e_2+3e_3 \\
p_4 & = & e_1^4-4 e_1^2 e_2+2 e_2^2+4 e_1 e_3-4 e_4 \\
 & \vdots & \\
\end{array}
\end{equation}

We can therefore take $e_k (x\otimes y)$ to be a polynomial of various $p_l (x\otimes y) = p_l (x) p_l (y)$. We may then write each $p_l (x)$ and $p_{l'} (y)$ as polynomials of $e_j (x)$ and $e_{j'} (y)$. Explicitly, we have the following factorisations: 
\begin{equation}\label{factid}
\begin{array}{rcl}
e_1 (x\otimes y) & = & e_1 (x) e_1 (y) \\
e_2 (x\otimes y) & = & e_1 (x) ^2 e_2 (y) + e_2 (x) e_1 (y) ^2 \\
 & & - 2 e_2 (x) e_2 (y) \\
e_3 (x\otimes y) & = & e_3 (x) e_1 (y) ^3 + e_1 (x) ^3 e_3 (y) \\
 & & + e_1 (x) e_2 (x) e_1 (y) e_2(y)\\
 & & - 2 e_1 (x) e_2 (x) e_3 (y) \\
 & & - 2 e_3 (x) e_1(y) e_2(y) + 3 e_3 (x) e_3 (y) \\
 & \vdots & \\
 e_{d_1 d_2-1} (x\otimes y) & = & e_{d_1} (x) ^{d_2-1} e_{d_2} (y) ^{d_1-1} e_{d_1-1} (x)  e_{d_2-1} (y) \\
 e_{d_1 d_2} (x\otimes y) & = & e_{d_1} (x) ^{d_2} e_{d_2} (y) ^{d_1} \\
\end{array}
\end{equation}
where $d_1$ and $d_2$ are the number of non-zero components of $x$ and $y$, respectively.
From these equations and the fact that $e_1(x)=e_1(y)=1$, since we are consiering only normalised states, we may immediately deduce that two common entanglement measures, the \emph{I-concurrence} (given by $C_2$) and the \emph{G-concurrence} (given by $C_d$ of a $d$-dimensional state) are monotones under catalysis.
It is also from the equations above that our main result, Proposition~\ref{mainres}, shall follow. 

Before presenting our main result, it is important to note that the expressions for $e_k(x\otimes y)$ are much simpler for $k$ close to $1$ or $d_1d_2$. This behavior can be understood from the following lemma.

\begin{lemma}
\label{thelemma}
Suppose $(x_i) = x \in \mathbb{R}^d$ such that $x_i \neq 0\ (\forall i)$. Define $1/x \equiv \left(\frac{1}{x_i}\right)$. Then $e_k (1/x) = e_{d-k}(x)/e_d(x)$.
\end{lemma}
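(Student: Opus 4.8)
The plan is to prove the identity using the generating function for the elementary symmetric polynomials, since this makes the relationship between $e_k(1/x)$ and $e_{d-k}(x)$ transparent. Recall the standard identity $\prod_{i=1}^d (1 + t\, x_i) = \sum_{k=0}^d e_k(x)\, t^k$, valid for an indeterminate $t$. Applying it to the reciprocals, I would start from
\begin{equation}
\sum_{k=0}^d e_k(1/x)\, t^k = \prod_{i=1}^d \left( 1 + \frac{t}{x_i} \right) ,
\end{equation}
which is where the hypothesis $x_i \neq 0\ (\forall i)$ is essential: it is precisely what guarantees that both $1/x$ and the product above are well defined.

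First I would factor a $1/x_i$ out of each term, giving $\prod_{i=1}^d (1 + t/x_i) = \left( \prod_{i=1}^d x_i \right)^{-1} \prod_{i=1}^d (t + x_i)$, and recognise $\prod_{i=1}^d x_i = e_d(x)$. The task then reduces to expanding $\prod_{i=1}^d (t + x_i)$. A term-by-term expansion---selecting $x_i$ for each index in a subset $S$ and $t$ for each index in its complement---shows that $\prod_{i=1}^d (t + x_i) = \sum_{m=0}^d e_m(x)\, t^{d-m}$.

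Next I would reindex by substituting $m = d-k$, rewriting the right-hand side as $\sum_{k=0}^d e_{d-k}(x)\, t^k$, so that the whole relation reads $\sum_k e_k(1/x)\, t^k = e_d(x)^{-1} \sum_k e_{d-k}(x)\, t^k$. Comparing the coefficients of $t^k$ on each side (the monomials $t^k$ being linearly independent) yields $e_k(1/x) = e_{d-k}(x)/e_d(x)$, as claimed.

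I expect no serious obstacle; the result is essentially bookkeeping, and the only point genuinely requiring care is the nonvanishing hypothesis, without which neither $1/x$ nor the division by $e_d(x)$ makes sense. For completeness I would also remark that the same identity admits a direct combinatorial proof: multiplying $e_k(1/x)$ by $e_d(x) = x_1 \cdots x_d$ cancels the $k$ reciprocals in each summand and leaves the product of the $d-k$ complementary variables, and summing over all $k$-subsets is the same as summing over all $(d-k)$-subsets via complementation, which is exactly $e_{d-k}(x)$.
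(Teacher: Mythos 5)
Your proof is correct, but your primary route is different from the paper's. The paper proves the lemma in a single line by writing
\[ e_{d-k}(x) = \sum_{i_1<\cdots<i_k} \frac{x_1\cdots x_d}{x_{i_1}\cdots x_{i_k}} = e_d(x)\,e_k(1/x), \]
i.e.\ exactly the direct combinatorial argument you relegate to a closing remark: complementation of $k$-subsets against $(d-k)$-subsets, with $e_d(x)$ appearing as the common numerator. Your main argument instead goes through the generating function $\prod_i(1+t\,x_i)=\sum_k e_k(x)\,t^k$, factors out $\prod_i x_i^{-1}$, and compares coefficients of $t^k$; every step there is sound, and the reindexing $m=d-k$ is handled correctly. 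The generating-function route is slightly longer for this particular identity but makes the duality $e_k \leftrightarrow e_{d-k}$ structurally transparent (it is the reversal of the polynomial $\prod_i(t+x_i)$), and it generalises more readily to other symmetric-function identities; the paper's one-liner is the more economical choice here and is, in substance, the combinatorial observation you already noted at the end. You are also right to flag that the hypothesis $x_i\neq 0$ is what makes $1/x$ and the division by $e_d(x)$ legitimate; the paper uses this silently.
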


\begin{proof}
\[ e_{d-k} (x) = \sum_{i_1<\cdots<i_k} \frac{x_1 \cdots x_d}{x_{i_1} \cdots x_{i_k}} = e_d(x)e_k(1/x) \]
\end{proof}

This simple lemma will also be useful in proving Proposition~\ref{mainres}.

\section{Bounding the dimension of a catalyst}
\label{sectMAIN}

In this section we use Newton's identities and in particular Eq.~(\ref{factid}) to provide
conditions on the catalysis. We start with a lower bound on the dimension.

\begin{proposition}
\label{mainres}
Suppose $\psi\otimes\chi \mapsto \phi\otimes\chi$ by LOCC and that $\psi,\phi,\chi$ are pure. If $\psi$ and $\phi$ both have precisely $d$ non-zero Schmidt coefficients and $\chi$ has $b$ non-zero Schmidt coefficients, then
\[ b \geq 1 + \left(\frac{d-1}{d}\right) \frac{ \log \left(C_{d-1} (\phi)\right) - \log \left(C_{d-1} (\psi)\right) }{ \log \left(C_d (\psi)\right) - \log \left(C_d (\phi)\right) } \]
\end{proposition}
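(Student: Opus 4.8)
The plan is to apply the concurrence monotonicity condition $C_k(\psi\otimes\chi)\geq C_k(\phi\otimes\chi)$ to the transformation $\psi\otimes\chi\mapsto\phi\otimes\chi$, but only for the single index $k=db-1$, that is, the second-highest concurrence of the $db$-dimensional product states. This index is chosen because the corresponding factorisation in Eq.~(\ref{factid}), namely $e_{d_1d_2-1}(x\otimes y)=e_{d_1}(x)^{d_2-1}e_{d_2}(y)^{d_1-1}e_{d_1-1}(x)e_{d_2-1}(y)$ with $d_1=d$ and $d_2=b$, cleanly separates the $\psi$- (or $\phi$-) dependence from the $\chi$-dependence, whereas the lower-index identities mix the two factors inextricably and cannot be handled in closed form.

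First I would translate between concurrences and elementary symmetric polynomials using the definition~(\ref{concs}) together with the elementary evaluation $e_k(\iota_n)=\binom{n}{k}n^{-k}$. This gives $e_d(\sigma(\psi))=(C_d(\psi)/d)^d$ and $e_{d-1}(\sigma(\psi))=C_{d-1}(\psi)^{d-1}d^{2-d}$, with analogous expressions for $\phi$ and $\chi$. Substituting these into the factorisation formula, and using $\sigma(\psi\otimes\chi)=\sigma(\psi)\otimes\sigma(\chi)$, I expect the combinatorial prefactor $e_{db-1}(\iota_{db})=(db)^{2-db}$ appearing in the denominator of $C_{db-1}(\psi\otimes\chi)^{db-1}$ to absorb exactly all the stray powers of $d$ and $b$, leaving the clean identity
\begin{equation*}
C_{db-1}(\psi\otimes\chi)^{db-1}=C_d(\psi)^{d(b-1)}C_{d-1}(\psi)^{d-1}\,C_b(\chi)^{b(d-1)}C_{b-1}(\chi)^{b-1}.
\end{equation*}

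The payoff is that the catalyst-dependent factor $C_b(\chi)^{b(d-1)}C_{b-1}(\chi)^{b-1}$ is identical on the $\psi$ and $\phi$ sides, so when the monotonicity inequality $C_{db-1}(\psi\otimes\chi)\geq C_{db-1}(\phi\otimes\chi)$ is written out and the positive $\chi$-factors are cancelled, one is left with $C_d(\psi)^{d(b-1)}C_{d-1}(\psi)^{d-1}\geq C_d(\phi)^{d(b-1)}C_{d-1}(\phi)^{d-1}$, an inequality free of the catalyst apart from its dimension $b$. Taking logarithms, collecting the $C_d$ terms on one side and the $C_{d-1}$ terms on the other, and dividing through by $d\,[\log C_d(\psi)-\log C_d(\phi)]$ then yields precisely the stated bound.

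The main obstacle is twofold. The routine but error-prone part is the prefactor bookkeeping: one must verify that the powers of $d$ and $b$ generated by the several $e_k(\iota_n)$ terms cancel exactly, since any surviving dimensional factor would spoil the cancellation of the $\chi$-dependence. The genuine subtlety is the final division, which is legitimate only when $\log C_d(\psi)-\log C_d(\phi)>0$; I would therefore invoke the fact, noted after Eq.~(\ref{factid}), that the G-concurrence $C_d$ is a monotone under catalysis to guarantee $C_d(\psi)\geq C_d(\phi)$, and treat the strict inequality $C_d(\psi)>C_d(\phi)$ as the hypothesis under which the quoted bound is meaningful (the degenerate case $C_d(\psi)=C_d(\phi)$, which forces the numerator $\log C_{d-1}(\phi)-\log C_{d-1}(\psi)\leq0$, requiring separate comment).
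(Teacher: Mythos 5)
Your proposal is correct and follows essentially the same route as the paper: both rest on the factorisation $e_{db-1}(x\otimes y)=e_{d}(x)^{b-1}e_{b}(y)^{d-1}e_{d-1}(x)e_{b-1}(y)$ (which the paper derives via Lemma~\ref{thelemma} and you take directly from Eq.~(\ref{factid})), leading to the same intermediate inequality $C_{d-1}^{d-1}(\psi)C_d^{d(b-1)}(\psi)\geq C_{d-1}^{d-1}(\phi)C_d^{d(b-1)}(\phi)$ before taking logarithms. Your bookkeeping of the $e_k(\iota_n)$ prefactors checks out, and your remark on the positivity of $\log C_d(\psi)-\log C_d(\phi)$ matches the paper's own remark following the proposition.
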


\begin{remark}
Notice that the bound given is nontrivial only if the ratio is positive. We can be assured that $\log \left(C_d (\psi)\right) - \log \left(C_d (\phi)\right)$ is positive because of the final equality of equation~(\ref{factid}). Indeed, it says that $C_{db} (\psi\otimes\chi) = C_d (\psi) C_b (\chi)$, so $C_{db} (\psi\otimes\chi) \geq C_{db} (\phi\otimes\chi) \Rightarrow C_{db} (\psi) \geq C_{db} (\phi)$ (it is also immediate from equation~(\ref{factid}) that $C_2$ is monotonic). Therefore, we require that $C_{d-1} (\psi) < C_{d-1} (\phi)$ for this bound to be nontrivial. Such examples can be found, although verifying their R\'{e}nyi entropy monotonicity (and thus the existence of a catalyst) is necessary on a case-by-case basis.
\end{remark}

\begin{proof}
From Lemma~\ref{thelemma} we have \[ e_{d b - 1} (\sigma(\psi\otimes\chi)) = e_1(1/\sigma(\psi\otimes\chi)) e_{db} (\sigma(\psi\otimes\chi)) \]
which, by equation~(\ref{factid}), may be rewritten as
\[ e_1 (1/\sigma(\psi)) e_1(1/\sigma(\chi)) e_{d} (\sigma(\psi)) ^{b} e_{b} (\sigma(\chi)) ^{d} .\]
Lemma~\ref{thelemma} further implies that
\[e_1 (1/\sigma(\psi))=e_{d-1} (\sigma(\psi))/e_{d} (\sigma(\psi)) .\]
Since $C_{db-1} (\psi\otimes\chi) \geq C_{db-1} (\phi\otimes\chi)$, a direct computation reveals that
\[ C_{d-1}^{d-1} (\psi) C_d^{d(b-1)} (\psi) \geq C_{d-1}^{d-1} (\phi) C_d^{d(b-1)} (\phi) \]
and taking the logarithm of this inequality proves the proposition.
\end{proof}

Our lower bound is thus found by asking that the second-to-last concurrence of $\psi\otimes\chi$ be greater than that of $\phi\otimes\chi$ and that this inequality be reversed for $\psi$ and $\phi$. It is therefore pertinent to ask what happens in the case of other violations of concurrence monotonicity. Indeed, it is possible to derive further conditions based on this assumption, but these conditions seem to be more difficult to analyse. 
Two examples of such conditions are presented as Propositions~\ref{res3} and~\ref{res2}. 

\begin{proposition}
\label{res3}
Suppose $\psi\otimes\chi \mapsto \phi\otimes\chi$ by LOCC and that $\psi,\phi,\chi$ are pure. Let
\begin{align}
& r(\chi) \equiv\frac{e_2(\sigma(\chi))-2e_3(\sigma(\chi))}{1-2e_2(\sigma(\chi))+3e_3(\sigma(\chi))}\nonumber\\
& a(\psi,\phi) \equiv e_2(\sigma(\psi))-e_2(\sigma(\phi))\nonumber\\
& b(\psi,\phi) \equiv e_3(\sigma(\psi))-e_3(\sigma(\phi))\nonumber\;.
\end{align}
Then,
\begin{equation}\label{tt}
r(\chi) \geq -\frac{b(\psi,\phi)}{a(\psi,\phi)}
\end{equation}
\end{proposition}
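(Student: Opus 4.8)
The plan is to read off the inequality from the monotonicity of the third concurrence under the LOCC transformation $\psi\otimes\chi\mapsto\phi\otimes\chi$, in the same spirit that Proposition~\ref{mainres} exploited the monotonicity of $C_{db-1}$. Comparing $\psi\otimes\chi$ and $\phi\otimes\chi$ in a common Hilbert space of dimension $N$, the normalisation factor $e_3(\iota_N)$ appearing in the definition~(\ref{concs}) is identical on both sides, so the concurrence inequality $C_3(\psi\otimes\chi)\geq C_3(\phi\otimes\chi)$ is equivalent to the bare statement $e_3(\sigma(\psi)\otimes\sigma(\chi))\geq e_3(\sigma(\phi)\otimes\sigma(\chi))$.

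First I would specialise the factorisation of $e_3(x\otimes y)$ in~(\ref{factid}) to normalised states by setting $e_1(x)=e_1(y)=1$, and collect the result by powers of the first argument to obtain the compact form
\[ e_3(x\otimes y)=e_3(x)\bigl[1-2e_2(y)+3e_3(y)\bigr]+e_2(x)\bigl[e_2(y)-2e_3(y)\bigr]+e_3(y). \]
Writing this once with $x=\sigma(\psi)$ and once with $x=\sigma(\phi)$, both at $y=\sigma(\chi)$, and subtracting, the lone term $e_3(\sigma(\chi))$ cancels and the monotonicity inequality collapses to
\[ b(\psi,\phi)\bigl[1-2e_2(\sigma(\chi))+3e_3(\sigma(\chi))\bigr]+a(\psi,\phi)\bigl[e_2(\sigma(\chi))-2e_3(\sigma(\chi))\bigr]\geq 0, \]
whose two brackets are precisely the denominator and numerator of $r(\chi)$.

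To rearrange this into~(\ref{tt}) I need two sign facts. The coefficient of $b(\psi,\phi)$ is the denominator of $r(\chi)$, and I would argue it is strictly positive by recognising from~(\ref{plasek}) that $1-2e_2(\sigma(\chi))=p_2(\sigma(\chi))$ is a sum of squares of Schmidt coefficients, so that adding $3e_3(\sigma(\chi))\geq 0$ keeps the bracket positive. The sign of $a(\psi,\phi)$ I would fix using the monotonicity of the I-concurrence $C_2$: applying the same subtraction to the factorisation of $e_2(x\otimes y)$ yields $a(\psi,\phi)\bigl[1-2e_2(\sigma(\chi))\bigr]\geq 0$, and since that bracket is again positive we conclude $a(\psi,\phi)\geq 0$.

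With the denominator positive I may divide the displayed inequality by it to reach $b(\psi,\phi)+a(\psi,\phi)\,r(\chi)\geq 0$, and dividing once more by $a(\psi,\phi)>0$ gives $r(\chi)\geq -b(\psi,\phi)/a(\psi,\phi)$. The work is genuinely routine once the factorisation is in hand; the real care lies in the sign bookkeeping. In particular the bound~(\ref{tt}) is only meaningful when $a(\psi,\phi)\neq 0$, and the final division preserves the direction of the inequality precisely because $C_2$ monotonicity forces $a(\psi,\phi)>0$ rather than permitting a negative value that would reverse it.
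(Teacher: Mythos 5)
Your proof is correct and follows essentially the same route as the paper: invoke monotonicity of the third concurrence, expand $e_3(\sigma(\psi\otimes\chi))-e_3(\sigma(\phi\otimes\chi))$ via the factorisation~(\ref{factid}) with $e_1=1$, and rearrange. You are in fact somewhat more careful than the paper, which relegates the positivity of $1-2e_2(\sigma(\chi))+3e_3(\sigma(\chi))$ and of $a(\psi,\phi)$ to a remark, and whose displayed intermediate inequality contains a typo ($3a(\psi,\phi)e_3(\sigma(\chi))$ where $3b(\psi,\phi)e_3(\sigma(\chi))$ is meant); your grouped form is the correct one.
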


\begin{remark}
From basic properties of the elementary symetric functions it follows that
the function $r(\chi)$ is always non-negative. Also, since $e_2$ is a monotone under eLOCC
the condition $\psi\otimes\chi \mapsto \phi\otimes\chi$ by LOCC
implies that $a(\psi,\phi)\geq 0$. On the other hand, the function $e_3$ is not a monotone under eLOCC and
therefore $b(\psi,\phi)$ can be both positive or negative. The bound is non-trivial when $b$ is negative.
\end{remark}

\begin{proof}
The proof follows from the fact the third concurrence (like all the other concurrences) is an entanglement monotone.
This implies that
$$
e_{3} (\sigma(\psi\otimes\chi))\geq e_{3} (\sigma(\phi\otimes\chi))\;.
$$
From the equation above and the expression for $e_3$ in Eq.(\ref{factid}) we find 
\begin{align}
b(\psi,\phi) & +a(\psi,\phi)e_2(\sigma(\chi))-2a(\psi,\phi)e_3(\sigma(\chi))\nonumber\\
& -2b(\psi,\phi)e_2(\sigma(\chi))+3a(\psi,\phi)e_3(\sigma(\chi))\geq 0\;.\nonumber
\end{align}
Rearrangement of the terms in the equation above leads to Eq.(\ref{tt}).
\end{proof}

From the monotonicity 
$$
e_{k} (\sigma(\psi\otimes\chi))\geq e_{k} (\sigma(\phi\otimes\chi))\;.
$$
one can obtain further conditions on the catalysis. However, these conditions become more and more complicated as
$k$ becomes closer to $db/2$, where $d$ is the dimension of $\psi$ and $b$ the dimension of the catalyst $\chi$.
For instance, for $k=db-2$ we get the following bound: 

\begin{proposition}
\label{res2}
Suppose $\psi\otimes\chi \mapsto \phi\otimes\chi$ by LOCC and that $\psi,\phi,\chi$ are pure. If $\psi$ and $\phi$ both have precisely $d$ non-zero Schmidt coefficients and $\chi$ has $b$ non-zero Schmidt coefficients, then
\[ 
\left( \frac{C_{b-1} (\chi)}{C_{b-2} (\chi)} \right)^{d-2} \geq \frac{b-1}{b}
\left(\frac{C_{d-2} (\phi)}{C_{d-2} (\psi)}\right)^{d-1} \left(\frac{C_d (\psi)}{C_d (\phi)}\right)^d \Lambda \]
with
\[ \Lambda \equiv \frac{1}{bd} \left( \rho_{2,d} (\psi) - \rho_{2,d} (\phi) \right) - \frac{d}{d-1} \left( \rho_{1,d} (\psi)^2 - \rho_{1,d} (\phi)^2 \right) \]
and
\[ \rho_{k,d} (\xi) \equiv \frac{C_{d-k} (\xi)^{d-k}}{C_{d} (\xi)^d} \]
\end{proposition}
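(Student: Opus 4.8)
The plan is to run the same engine that drives Propositions~\ref{mainres} and~\ref{res3}, but one concurrence further down from the top. The only input is the monotonicity of the $(db-2)^{\text{th}}$ concurrence under the catalysed transformation, i.e.
\[ e_{db-2}(\sigma(\psi\otimes\chi)) \geq e_{db-2}(\sigma(\phi\otimes\chi)). \]
Because $db-2$ is close to the top dimension $db$, Lemma~\ref{thelemma} makes this tractable: writing $e_{db-2}(\sigma(\zeta)) = e_2(1/\sigma(\zeta))\,e_{db}(\sigma(\zeta))$ and using the last line of Eq.~(\ref{factid}) to factor $e_{db}(\sigma(\psi\otimes\chi)) = e_d(\sigma(\psi))^b e_b(\sigma(\chi))^d$, the common factor $e_b(\sigma(\chi))^d$ cancels from both sides, leaving an inequality in which the catalyst enters only through $e_2(1/\sigma(\cdot\otimes\chi))$.

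Next I would expand $e_2(1/\sigma(\psi\otimes\chi))$ with the $e_2(x\otimes y)$ formula of Eq.~(\ref{factid}), taking $x=1/\sigma(\psi)$ and $y=1/\sigma(\chi)$, and likewise for $\phi$. Applying Lemma~\ref{thelemma} once more converts each $e_1(1/\sigma)$ and $e_2(1/\sigma)$ back into ordinary elementary symmetric polynomials, so that the catalyst appears solely through $e_{b-1}(\sigma(\chi))$ and $e_{b-2}(\sigma(\chi))$ (equivalently through $C_{b-1}(\chi)$ and $C_{b-2}(\chi)$). Collecting terms, the inequality takes the schematic form $\mathcal{C}_1(\chi)\,D_1(\psi,\phi) + \mathcal{C}_2(\chi)\,D_2(\psi,\phi)\geq 0$, where $D_1,D_2$ are differences of $e$-monomials in $\psi$ and $\phi$ coupled by the $-2\,e_2 e_2$ cross term of Eq.~(\ref{factid}); dividing through by the appropriate $\chi$-factor isolates the ratio $C_{b-1}(\chi)/C_{b-2}(\chi)$ on one side.

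Finally I would translate every $e_k(\sigma)$ into a concurrence via Eq.~(\ref{concs}). This substitution absorbs the combinatorial constants $e_k(\iota)$ (the binomials and powers of $d$ and $b$), which is exactly what produces the prefactor $\tfrac{b-1}{b}$, the ratios $C_{d-2}(\phi)/C_{d-2}(\psi)$ and $C_d(\psi)/C_d(\phi)$, and — upon recognising $e_2(1/\sigma)\propto\rho_{2,d}$ and $e_1(1/\sigma)^2\propto\rho_{1,d}^2$ — the quantity $\Lambda$ as the $\psi$-versus-$\phi$ combination $D_2-\tfrac12 D_1$ up to normalisation. Reassembling gives the stated bound.

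The hard part will be sign control, precisely the issue flagged in the remark to Proposition~\ref{res3}. To divide through and keep the inequality pointing as written, one must know the sign of the coefficient multiplying the catalyst ratio; that coefficient is an $e_{d-2}$-type difference, and since $C_{d-2}$ is \emph{not} a monotone under catalysis this sign is not automatic. For $d=3$ it is, because then $e_{d-2}=e_1=1$ and the coefficient collapses to the manifestly nonnegative G-concurrence difference $e_d(\sigma(\psi))^{b-1}-e_d(\sigma(\phi))^{b-1}$ (recall $C_d(\psi)\geq C_d(\phi)$ from the remark to Proposition~\ref{mainres}); for $d\geq 4$ establishing this sign is the real obstacle. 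A secondary nuisance is the bookkeeping of the $-2\,e_2 e_2$ cross term and of the several distinct powers of $C_d(\psi)$ and $C_d(\phi)$ that survive, which must be organised before the concurrence substitution collapses everything into the compact form in the statement.
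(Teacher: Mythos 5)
The paper itself states Proposition~\ref{res2} without any proof, so there is nothing to compare line by line; the surrounding text only says that the bound comes from monotonicity of $e_{db-2}$. Your outline is exactly the route the paper intends: use $e_{db-2}(\sigma(\psi\otimes\chi))\geq e_{db-2}(\sigma(\phi\otimes\chi))$, write $e_{db-2}=e_2(1/\sigma)\,e_{db}(\sigma)$ via Lemma~\ref{thelemma}, factor $e_{db}$ and $e_2$ of the inverted tensor product through Eq.~(\ref{factid}), and invert back. Carrying this out gives
\[
A\,e_{b-2}(y)e_b(y)^{d-1}+B\,e_{b-1}(y)^2e_b(y)^{d-2}-2B\,e_{b-2}(y)e_b(y)^{d-1}\geq 0,
\]
with $y=\sigma(\chi)$, $A=e_{d-1}(\sigma(\psi))^2e_d(\sigma(\psi))^{b-2}-e_{d-1}(\sigma(\phi))^2e_d(\sigma(\phi))^{b-2}$ and $B=e_{d-2}(\sigma(\psi))e_d(\sigma(\psi))^{b-1}-e_{d-2}(\sigma(\phi))e_d(\sigma(\phi))^{b-1}$. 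The sign obstruction you flag is genuine and is not resolved anywhere in the paper: isolating the catalyst ratio requires dividing by $B$, which is an $e_{d-2}$-type difference, and since $C_{d-2}$ is not a catalysis monotone its sign is not automatic for $d\geq 4$ (your $d=3$ observation, where $B$ reduces to a difference of powers of the G-concurrence and is nonnegative, is correct). Any complete proof must either assume $B\geq 0$ or split into cases; the proposition as stated does neither.

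There is a second concrete difficulty your schematic glosses over. After dividing by $e_{b-2}(y)e_b(y)^{d-1}$, the catalyst enters only through $e_{b-1}(y)^2/\bigl(e_{b-2}(y)e_b(y)\bigr)$, which in concurrence form is proportional to $C_{b-1}(\chi)^{2(b-1)}/\bigl(C_{b-2}(\chi)^{b-2}C_b(\chi)^{b}\bigr)$: it carries no $d$-dependence and involves $C_b(\chi)$ as well. The left-hand side of the stated bound, $\bigl(C_{b-1}(\chi)/C_{b-2}(\chi)\bigr)^{d-2}$, cannot arise from this route without further manipulation that neither you nor the paper supplies; so either the statement contains a normalisation error or typo, or an additional (unstated) step is needed. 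Your plan is the right engine and your diagnosis of where it strains is accurate, but as written it does not yet land on the displayed inequality, and the "bookkeeping" you defer is precisely where the derivation either succeeds or reveals that the published form needs correction.
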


Propositions~\ref{res3} and \ref{res2} assure us that it is indeed possible to bound the entanglement of a possible catalyst, but the bound depends on the number $b$ of non-zero Schmidt coefficients of $\chi$. This limits the usefulness of the proposition and, in all likelihood, of some other expressions derivable from equation~(\ref{factid}). It is still remarkable, however, that the concurrences of the catalyst can be approached using our methods.

\begin{example}\label{mainex}
Consider two states $\psi$ and $\phi$ with Schmidt vectors
\[ \sigma(\psi) = \left( \frac{19}{351}, \frac{1}{13}, \frac{64}{351}, \frac{71}{351}, \frac{3}{13}, \frac{89}{351} \right) \]
\[ \sigma(\phi) = \left( \frac{9}{196}, \frac{25}{196}, \frac{13}{98}, \frac{5}{28}, \frac{3}{14}, \frac{59}{196} \right) \]

We immediately see that there is no LOCC transformation between these two states since $19/351>9/196$ but $19/351+1/13<9/196+25/196$ and yet it is possible to verify numerically that there exists $\chi$ with $\psi\otimes\chi \mapsto \phi\otimes\chi$ by finding the roots of $f(\alpha):= S_\alpha (\sigma(\psi)) - S_\alpha (\sigma(\phi))$~\cite{Turgut:2007,Klimesh:2007} (the only root is $\alpha\rightarrow0$). Proposition~\ref{mainres} informs us that such a catalyst must have dimension greater than about $2.7$, so in particular we have found that no catalysts of dimension $2$ exist for the eLOCC transformation $\psi\mapsto\phi$. Supposing that a catalyst $\chi$ of dimension three exists, Proposition~\ref{res2} asserts that $C_2 (\chi)\geq 0.436$.
\end{example}

\section{Conclusions}
\label{sectEND}

The main problem we have considered in this paper is the following: supposing we have an eLOCC transformation $\psi\mapsto\phi$, what catalysts $\chi$ give us an LOCC transformation $\psi\otimes\chi\mapsto\phi\otimes\chi$? We provide a partial answer based on analysis of the generalised concurrence monotones. Our analysis is based on the ability to express the concurrences of a tensor product state ($\psi\otimes\chi$) in terms of the concurrences of its component states ($\psi$ and $\chi$) and provides us with, in particular, a lower bound on the dimension of a possible catalyst.

Further conditions are derivable from the fact that some concurrences are not monotones under eLOCC despite all being entanglement monotones. In fact, our conditions require this non-monotonicity in order to be nontrivial. In this sense, our solution is only partial: there is still no general method for finding a catalyst for a given eLOCC transformation. We can only give useful information in the case that there is non-monotonicity in one of the concurrences of $\psi$ and $\phi$. Nevertheless, such cases exist, and we have given an example where our methods are useful.

To proceed from this work to a more general theory of eLOCC transformations, it would be necessary to fully characterise the behaviour of the concurrences (and perhaps other entanglement monotones) under eLOCC maps. It is important to understand such transformations because they provide extra conversion power for quantum states with almost no increase in resource cost (one catalyst state is reusable if one wishes to perform many of the same conversions). Such conversion power can be extremely valuable, since most quantum informational tasks require a specific entangled state. In order to provide that specific state, the process of interconverting entangled states must be understood and optimized as best as possible. Our techniques are a step in that direction.

\emph{Acknowledgments:---}
The authors acknowledge support from the National Science and Engineering Research Council.

\end{document}